\begin{document}
\title{A handle is enough for a hard game of Pull}
\author{Oscar Temprano\\
\small{\texttt{oscartemp@hotmail.es}}}
\date{}
\twocolumn[
\begin{@twocolumnfalse}
\maketitle
\begin{abstract}
We are going to show that some variants of a puzzle called pull in which the boxes have handles (i.e. we can only pull the boxes in certain directions) are NP-hard
\end{abstract}
\end{@twocolumnfalse}
]

\newtheorem{teorema}{Theorem}
\newtheorem{lema}{Lemma}

\section{Introduction}

In \cite{Pull}, Marcus ritt proved that two variants of pushpush \cite{Push}, called pull and pullpull, were np-hard. He wondered about the complexity of a variant of pull in which the boxes have either horizontal (left and right) or vertical (up and down) handles.
We will solve that problem in this document, by showing that the problem is NP-hard. We will also show that other variants of pull are NP-hard as well.
In section two we are going to show that pull is NP-hard.
In section three we are going to show that the variant of pull in which the boxes have handles in opposite directions (i.e left and right) is NP-hard.
Finally, in section four, we will show that the variant of pull where the player can only pull boxes in a single direction is NP-hard.
We will show that the previous variants are NP-hard by reducing satisfiability to those variants of pull.
The proofs in this paper are for variants of pull with fixed blocks (Pull-F), but they can be easily adapted so that they work for variants of pull without fixed blocks.
This can be done by filling with blocks all the positions of the board that are inaccesible in the corresponding Pull-F variants, by making the separation between the "gaps" and the boxes inside the crossovers bigger and by making the walls thicker for those gadgets for which it is necessary.
We will represent movable blocks(boxes) by orange/brown squares and fixed blocks by red squares.

\section{Pull is NP-hard}

In this section we are going to prove that Pull is NP-hard, as they did in \cite{Pull}, but we are going to prove the result in a different way. We will adapt the proof  in the next sections to show that other  pull variants are NP-hard

First we are going to show the gadgets that we need to prove that pull is NP-hard.

\subsection{No return gadget}

\begin{figure}[H]
  \centering
    \includegraphics[scale=0.5]{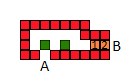}
  \caption{No return gadget (the green squares are places in which the player has to place the boxes in order to exit from the gadget).}
  \label{fig:no_return}
\end{figure}

This is a no return gadget, similar to the ones in \cite{Pull} and \cite{Push}

We can use these gadget for those path variants in which all the boxes can be moved in all directions or those variants in which boxes can only be moved either in horizontal or vertical directions. So we are going to reuse this gadget for the next section.

This gadget allows passage from A to B, forbids passage from B to A, and can only be traversed once

\begin{lema}
\label{one}
The no return gadget can be traversed from A to B , but after it is traversed, te player cannot go back from B to A.
\end{lema}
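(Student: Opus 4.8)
The plan is to argue about the no-return gadget purely combinatorially, by tracking which boxes the player can reach and push/pull in each configuration. Since I cannot see the figure, I will describe the argument at the level of the gadget's intended structure: there is an entrance corridor at $A$, an exit corridor at $B$, and one or more boxes positioned so that entering from $A$ lets the player perform a sequence of pulls that clears a path toward $B$, while that same sequence destroys the only path that could lead back to $A$. The green squares mentioned in the caption mark the cells where the boxes must end up for the player to squeeze past them to $B$.

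First I would establish traversability from $A$ to $B$. I would exhibit the explicit sequence of moves: starting in the entrance corridor, the player walks up to the first box, pulls it (in whichever direction the handle allows) into the adjacent free cell, steps into the cell the box vacated, and repeats this for each box in the chain until every box sits on its green target cell; then the exit corridor to $B$ is unobstructed. Here I would just need to check that at each step the cell behind the player is free (so the pull is legal) and that the box's destination cell is free and not blocked by a wall or fixed (red) block.

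Next, and this is the crux, I would prove no return from $B$ to $A$. The key observation is that once the boxes have been pulled onto the green cells, the corridor segment that connected them back to $A$ is now permanently blocked: a box sits in the only passage, and from the $B$ side the player cannot get to a position from which that box could be pulled back out of the way — pulling requires the player to stand on the side the box would move toward, and that side is exactly the $A$-side cell the player can no longer reach. I would make this precise by identifying an invariant: after traversal, every box adjacent to the critical corridor cell can only be pulled deeper toward $B$ or not at all, never back toward $A$, because the cell the player would need to occupy to pull it the other way is sealed off (by a wall, a fixed block, or another box). Formally I would do a short case analysis over the boxes bordering the separating wall and show each is "stuck" from the $B$ side.

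The main obstacle will be the "traversed only once" clause together with the no-return claim: I must rule out clever maneuvers in which the player, on the way through, pulls a box in an unintended direction or into an unintended cell, thereby either creating a shortcut back or leaving the gadget in a state that can be re-entered and re-traversed. I would handle this by a careful enumeration of which pulls are geometrically possible from the entrance side — most candidate moves are blocked by the gadget's walls or fixed blocks by construction — and showing that in every legal play the boxes end up (possibly in a different order) filling the same set of green cells, so the resulting dead state is the same. I expect this enumeration to be the longest part of the argument, but each individual case should be a one-line check against the figure.
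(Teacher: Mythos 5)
Your argument matches the paper's proof in its essentials: the forward traversal forces box 1 into the green cell that seals the corridor back to $A$, and from the $B$ side the player can never again occupy the $A$-side cell needed to pull that box clear, which is exactly the mechanism the paper invokes (its proof simply notes that pulling box 1 to the leftmost green square is the only way to free box 2 from the narrow corridor, and that this placement blocks the return). Your proposed case enumeration is more systematic than what the paper writes down, but it is the same approach, just spelled out; no gap.
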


\begin{proof}
To go to B, the player has to move the boxes 1 and 2 out of the way. In figure 1, the only way to do it is by pulling the boxes to the left (it can be any other direction if the gadget is rotated and/or reflected).
To be able to pull box 2 out of the narrow corridor, the player has to pull  box 1 to the position marked by the leftmost green square, blocking the way to A, the player can then pull the other box to the other green square and exit through B
\end{proof}

\subsection{Unidirectional crossover}

This crossover works like the crossover in \cite{Push} , but it doesn´t matter the order in which the player uses the crossover, he can traverse the crossover horizontally and then vertically or viceversa.
If the player enters the gadget from the entry A1, he can only exit through A2. And if the player enters from B1 he can only exit through B2 (see figure 2).

\begin{figure}[H]
  \centering
    \includegraphics[scale=0.3]{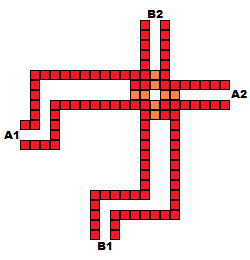}
  \caption{The crossover in its original state}
  \label{crossover_pull_path}
\end{figure}

If the player enters the gadget from A1, he can reach A2 by pulling  to the left all boxes that are blocking the way to A2.

If the player enters the gadget from B1, he can reach B2 by pulling downwards all boxes that are blocking the way to B2.

\begin{figure}[H]
  \centering
    \includegraphics[scale=0.3]{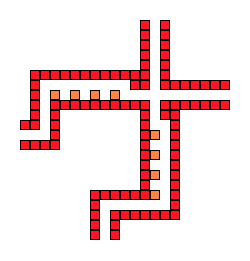}
  \caption{A crossover that has been crossed from both directions}
  \label{crossover_A_2}
\end{figure}

\begin{lema}
\label{two}
In its original state, the crossover gadget can be traversed from A1 to A2 or it can be traversed from B1 to B2. Those are the only possible traversals of the gadget
\end{lema}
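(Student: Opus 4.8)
The plan is to carry out a case analysis over the four ports $A_1, A_2, B_1, B_2$ from which the player may enter, determining in each case exactly which ports are reachable as exits. The whole argument rests on the structural fact peculiar to Pull: a box can only be displaced by pulling it toward the player, so the player can never clear a corridor lying \emph{ahead} of every box unless there is free space \emph{behind} the boxes into which to stack them, and that space is precisely what Figure~\ref{crossover_pull_path} does or does not provide along each axis. I would first record the reflection exchanging the horizontal axis ($A_1,A_2$) with the vertical axis ($B_1,B_2$); this halves the work, so it suffices to treat entry from $A_1$ and entry from $A_2$ and then quote symmetry for $B_1$ and $B_2$.

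For the positive part (the traversals $A_1\to A_2$ and $B_1\to B_2$ exist), I would argue as in the surrounding text: entering from $A_1$, every box obstructing the horizontal corridor can be pulled to the left into free cells on the $A_1$ side, after which the corridor is clear all the way to $A_2$; the vertical case is identical after a rotation. I would then observe that the cells used to park boxes in the horizontal pass are disjoint from those used in the vertical pass, so performing one traversal leaves the other still available and unaltered — this is exactly the "crossed from both directions" configuration of Figure~\ref{crossover_A_2} — which also shows the order of the two passes is irrelevant.

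For the negative part (no other traversal is possible), the key claims are: (i) a player entering from $A_1$ can never reach $B_1$ or $B_2$, because the boxes sealing the vertical corridor can only be removed by pulling them along the vertical axis, and to stand in a cell from which such a pull is legal the player would already have to be inside the vertical corridor beyond those boxes — a cell that is itself unreachable from $A_1$ without first moving them, a circular dependency; thickening the relevant walls (as noted in the introduction) rules out the only alternative routes. (ii) A player entering from $A_2$ becomes stuck: the first box encountered can only be pulled back toward $A_2$, never far enough to uncover the corridor, so no port other than $A_2$ is reachable — the same no-return phenomenon established in Lemma~\ref{one}. By symmetry, (i) and (ii) also cover entry from $B_1$ and from $B_2$, and combining the two parts yields the statement.

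I expect the main obstacle to be (i): making the circular-dependency argument airtight requires a careful, figure-driven enumeration of which cells of the central region the player can occupy, which box each occupiable cell lets him pull, and a verification that no sequence of pulls performed while traversing the horizontal corridor ever produces a box configuration that unlocks a vertical pull (and vice versa). Everything else is either routine geometry of the corridors or a direct appeal to symmetry and to Lemma~\ref{one}.
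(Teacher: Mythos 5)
Your proposal follows essentially the same route as the paper's proof: reduce to the entry-from-A1 case by symmetry, show the horizontal traversal succeeds by pulling the obstructing boxes back toward the A1 side, and argue that the boxes sealing the perpendicular corridor cannot be pulled out of the way. The only divergence is in the stated mechanism for that last step --- the paper says the player \emph{has not enough space} to perform the vertical pulls, whereas you say the player cannot reach a cell from which such a pull would be legal; both are figure-dependent claims of the same kind, and your plan additionally covers entry from the exit ports A2 and B2 and the independence of the two traversals, points the paper leaves implicit.
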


\begin{proof}
Let´s suppose that the player enters the gadget from A1, the other direction is simmetrical.
When the player enters, the only thing he can do is pull the two boxes that he finds at the entrance. From there, he cannot go to B2 or B1 because the player has not enough space to pull down the boxes that are blocking the path to B2, and he has not enough space to pull up the boxes that are blocking the way to B1
\end{proof}

Variants of this gadget will form an important component of the different \\crossovers that we will use for the other  variants of pull.

\subsection{Bidirectional crossover}

\begin{figure}[H]
  \centering
    \includegraphics[scale=0.5]{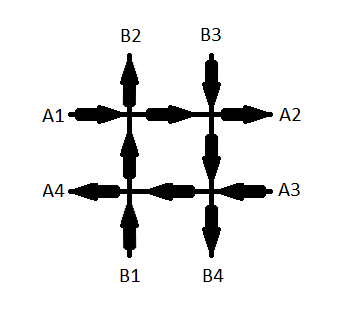}
  \caption{An schematic view of the bidirectional crossover}
  \label{complete_crossover_1}
\end{figure}

This is a crossover that is formed by four limited unidirectional crossovers, rotated and placed as is necessary to make the gadget work. At the entries and exits of the unidirectional crossovers that make the gadget there are no return gadgets.

This gadget is similar to the bidirectional gadget of \cite{Push} and allows the player to traverse it in the horizontal direction and then the vertical direction or viceversa. This gadget only works for the path versions of pull that let the player move the boxes in all directions.

\subsection{Variable gadget}

\begin{figure}[H]
  \centering
    \includegraphics[scale=0.5]{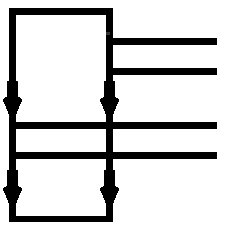}
  \caption{An schematic picture of a variable gadget. The arrows represent no return gadgets inside the gadget. In this example, the variable appears four times in the SAT instance. Two times as a positive literal and other two times as a negative literal. If the player takes the left path, that means that the player has decided to set that variable positive. If the player takes the right path, the player has decided to set the variable negative}
  \label{variable_1}
\end{figure}

For this variant of the game and the following ones, we replace every literal wire inside the variable gadget with two parallel wires. One is used to go to the clause and the other is used to return from the clause. We have to ensure that the player can only use these wires if he comes from the right place.

If the player comes from the path of a positive literal, we don´t have to do any change, since the one way gadget that is just after the literal wires of the negative literal path prevent him from accessing those literal wires.

Now if the player comes from the negative literal path, we have to ensure that the player cannot access any of the literal wires of the positive literal path.

To do that, we simply place an unidirectional crossover for each crossing of the wires inside the variable gadget

\subsection{Clause gadget}

\begin{figure}[H]
  \centering
    \includegraphics[scale=0.5]{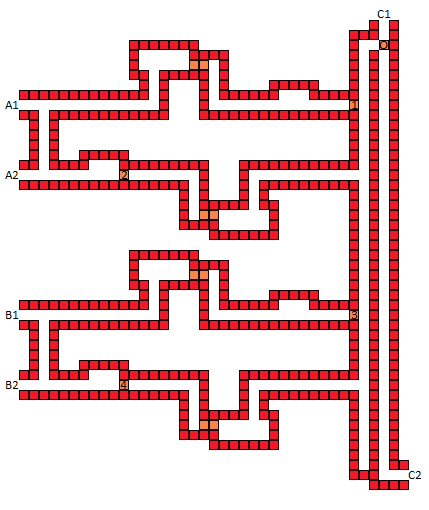}
  \caption{A picture of a clause gadget with two literals, the wires A1 and B1 correspond to entrances to the clause. The wires A2 and B2 are exits of the clause gadget. This gadget works for those variants of pull where the player can pull the boxes in all directions or only in opposite directions (either left and right or up and down)}
  \label{clause_1}
\end{figure}

The gadget has $n + 1$ entrances and $n$ exits where $n$ is the number of literals in the clause.

Every entrance that comes from a literal wire is adjacent to an exit. The entrances and the exits are connected by a corridor 

The gadget has a no return gadget before all entrances and exits, including the C1 entrance and the C2 exit

\begin{lema}
\label{six}
If the player wants to go from C1 to C2 in figure 6, the player had to access the gadget previously from an entrance that belongs to a literal wire
\end{lema}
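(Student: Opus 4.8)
The plan is to show that the corridor joining $C_1$ to $C_2$ is obstructed by a box which the player can only take out of the way after entering the gadget through one of the literal-wire entrances. I would begin by fixing notation on Figure~\ref{clause_1}: let $b$ be a box that, in the initial state, blocks the corridor between $C_1$ and $C_2$, and for the $i$-th literal of the clause let $P_i$ be the short passage that joins the literal-wire entrance $E_i$ to the exit $X_i$ adjacent to it. What has to be shown is that a traversal from $C_1$ to $C_2$ forces an \emph{earlier} moment at which the player stood inside some $P_i$, having arrived there through $E_i$.

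First I would rule out that $b$ is moved out of the corridor during the $C_1\to C_2$ run itself. Entering through $C_1$, the player immediately crosses the no return gadget placed before $C_1$, so by Lemma~\ref{one} he can no longer leave through $C_1$; and in order to pull $b$ sideways he would have to occupy the cell of some passage $P_i$ lying next to $b$, which is not reachable from within the corridor (the passages $P_i$ open onto the literal wires, not onto the $C_1$--$C_2$ corridor). Hence a player coming from $C_1$ stays trapped on the $C_1$ side of $b$ and never reaches $C_2$; so $b$ must already have been displaced during an earlier visit to the clause gadget. The symmetric argument, now using the no return gadget before $C_2$, excludes doing it from the $C_2$ side as well.

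Then I would identify that earlier visit. The only cells from which $b$ can be pulled clear of the corridor lie in some $P_i$, so on that visit the player was inside some $P_i$. There are just two accesses to $P_i$: the literal-wire entrance $E_i$ and the exit $X_i$; but $X_i$ is guarded by a no return gadget, which by Lemma~\ref{one} forbids entering $P_i$ through $X_i$. Therefore the player reached $P_i$ through $E_i$, an entrance belonging to a literal wire, and did so before the $C_1\to C_2$ traversal; together with the previous step this proves the lemma.

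The step I expect to be the real obstacle is the claim that $b$ cannot be coaxed out of the corridor by some less obvious manoeuvre — nudging $b$ only part of the way and finishing from another side, using a different box such as one of those belonging to the no return gadgets guarding the entrances and exits, or entering and leaving a passage $P_i$ more than once. Establishing this is a finite local check on Figure~\ref{clause_1}: away from the taps $P_i$ the corridor is one cell wide, each $P_i$ leaves room for exactly the single pull that clears its own box and for nothing further, and every no return gadget shown (before $C_1$, before $C_2$, and before each $E_i$ and each $X_i$) blocks exactly the back-and-forth motion that any such trick would need — all of which follows from Lemma~\ref{one} and from the adjacency of each literal-wire entrance to its exit that is built into the gadget.
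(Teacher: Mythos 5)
Your proposal is correct and follows essentially the same route as the paper: both arguments hinge on the single box (labelled O in Figure~\ref{clause_1}) obstructing the $C_1$--$C_2$ corridor, which can only be pulled clear by a player standing in a cell reachable solely via a literal-wire entrance. You simply spell out more of the case analysis (ruling out the $C_2$ side and entry through an exit guarded by a no return gadget) that the paper's two-sentence proof leaves implicit.
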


\begin{proof}
When the player enters the gadget, there is a box (with the label O in figure 6) that it´s blocking the way to B2.
In order to get this box out of the way, the player has to access the gadget from one of the other entrances (A1 or B1) and pull the box one square to the left, so that the box doesn´t block the corridor anymore.
\end{proof}

\begin{lema}
\label{seven}
When the player enters the gadget from an entrance that belongs to a literal wire he has to exit through the adjacent exit
\end{lema}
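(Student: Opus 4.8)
The plan is to combine a confinement argument with a reachability analysis of the interior corridor. I would write out the case of entry through the literal entrance $A_1$; the other literal entrances are handled symmetrically. First I would invoke Lemma~\ref{one}: since there is a no return gadget placed just before every entrance, the player cannot leave the clause through $A_1$ or through any other entrance, and in particular cannot retreat to the wire feeding $C_1$; and since there is also a no return gadget just before every exit, the player cannot slip into the gadget through an exit and then come back out. Consequently, once the player is inside via $A_1$, the only way he can leave the clause gadget is through one of the exits, so it remains to show that the adjacent exit $A_2$ is the only exit he can actually reach.

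For that I would trace the moves available from $A_1$. The literal entrance $A_1$ sits one square from its adjacent exit $A_2$, so walking straight across to $A_2$ is always available; the only obstruction on that short stretch is the box $O$ (in the case where $A_1$ is the entrance next to $O$), which by the discussion in Lemma~\ref{six} the player may pull one square to the left, clearing the corridor without help from anyone else. I would then check, box by box, that no legal sequence of pulls lets the player move from the neighbourhood of $A_1$ into the main corridor far enough to reach another literal exit or the exit $C_2$: the boxes lining the corridor are spaced so that there is never enough empty room to pull the next one aside, and the no return gadgets inserted at the junctions of the corridor with the other entrances/exits and with $C_1$ and $C_2$ cannot be approached from their forbidden side, so each of them seals off the portion of the corridor beyond it. In particular, pulling $O$ to the left opens the $C_1 \to C_2$ route but leaves the $A_1$-player on the side of $O$ from which $C_2$ remains unreachable, so that move gains the player nothing toward a wrong exit.

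I expect this second step to be the real obstacle: it is a finite but fiddly case analysis, since for every box the player can touch one must enumerate the directions in which it can legally be pulled and verify that none of them opens a fresh passage toward a non-adjacent exit, while simultaneously confirming that each no return gadget is oriented and positioned exactly so that the player can never reach it from its blocked side. Granting the spacing and wall-thickness claims visible in Figure~\ref{clause_1}, these checks go through and the lemma follows.
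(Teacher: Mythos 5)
Your proposal misses the actual mechanism of the clause gadget on both sides of the argument. First, the adjacent exit is not ``always available'' by simply stepping across and pulling $O$ aside: in the paper's construction each exit wire is initially \emph{locked} by a box sitting at its far end (box ``1'' for the adjacent exit, box ``2'' for the others). The paper's proof shows that a player entering through $A_1$ is forced to descend through the short connecting corridor into the adjacent exit wire, pull box ``1'' into the gap to unlock that exit, return to the entrance wire, cross a no return gadget, move box ``3'' into its gap, and only then reach and move box ``$O$''. This coupling --- entering through a literal entrance unlocks exactly its adjacent exit and no other --- is the key idea of the lemma, and it is absent from your argument.

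Second, your claim that the other exits are unreachable because ``the boxes lining the corridor are spaced so that there is never enough empty room to pull the next one aside'' is not how the gadget works and would fail against the figure: the interior corridor connects all entrances and exits, so once $O$ is moved the player can physically walk to any other exit wire. What actually excludes the wrong exits is a \emph{trap} argument, not an unreachability argument: a non-unlocked exit wire still has its box ``2'' in place, so the player who enters it becomes stuck between that box and the wire's no return gadget (and if the wire was already used, its spent no return gadget blocks passage). Likewise $C_2$ is not excluded because the player is ``on the wrong side of $O$''; it is excluded because the exit wires, not the corridor, are where the control lies. You would need to replace your box-spacing case analysis with the lock/unlock coupling and the trapping argument to obtain a correct proof.
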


\begin{proof}
When the player comes from an entrance wire (A1, for instance) he can go to the wire below using the corridor, then he can move the box labelled "1" to the center of the "gap" at his left, because there is a no return gadet at the end of the wire, he has to return to the wire above him and cross a no return gadget in order to get to the clause. Then he has to cross a no return gadget and move the box labelled "3" to the "gap" at the left of the box, then he can enter the gadget and move the box labelled "O" to the left. The no return gadgets that are at the entrance wire (or the boxes at the end of the wires, if the entrances hasn´t been traversed before)  doesn´t allow the player to return from these wires.
If he tries to exit  from another return wire different from the one that he has unlocked (for instance B2) he will get trapped between a no return gadget and a box (with the label "2" in figure 6) in case the wire had not been traversed before, or he won´t be able to go through the wire, because the no return gadget has already been traversed before. He also won´t be able to exit from the wire he entered because a no return gadget is blocking the way.
\end{proof}

\subsection{General view of the \\board}

We have seen the gadgets that we are going to use. Now we are going to shown how to arrange them to create a board.

In the board, the variable gadgets are at the left of the board and the clause gadgets are at the right of the board. Each variable gadget is put below the next one, the same for the clause gadgets. 

\begin{figure}[H]
  \centering
    \includegraphics[scale=0.8]{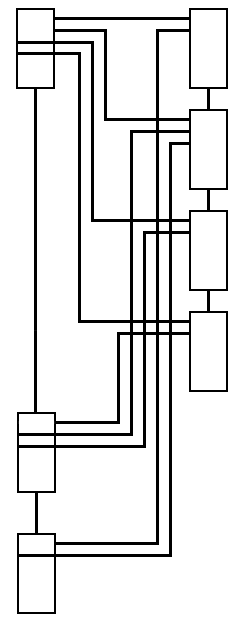}
    \caption{An example board. The variables at the left are $x_1$ $x_2$ and $x_3$. The clauses at the right are $(\neg x_1 \lor \neg x_3) \wedge	(\neg x_1 \lor x_2 \lor x_3) \wedge (x_1 \lor x_2) \wedge (x_1 \lor \neg x_2)$}
  \label{fig:board}
\end{figure}

\begin{teorema}
\label{T2}
Pull is NP-hard
\end{teorema}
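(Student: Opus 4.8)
The plan is to reduce SAT to Pull-F. Given a boolean formula $\varphi$ in CNF with variables $x_1,\dots,x_m$ and clauses $C_1,\dots,C_k$, I would assemble the board of Figure~\ref{fig:board}: the variable gadgets stacked on the left, the clause gadgets stacked on the right, and for each occurrence of a literal in a clause a pair of parallel wires (a ``go'' wire and a ``return'' wire) running from the appropriate side of the corresponding variable gadget to the matching entrance of the clause gadget. Wherever two such wires must cross I place an instance of the unidirectional crossover of Lemma~\ref{two}, and where a path must genuinely be usable in both directions I use the bidirectional crossover built from four of them. The player starts at the entrance of the first variable gadget and the target is placed after the last clause gadget; the gadgets are chained in series, with no-return gadgets (Lemma~\ref{one}) separating consecutive components so that progress through the board is monotone. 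This construction is computable in time polynomial in $|\varphi|$, since each gadget has constant size and the number of wire crossings is at most quadratic in $|\varphi|$.

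For completeness, suppose $\varphi$ has a satisfying assignment. I would exhibit the corresponding solution explicitly: at variable gadget $i$ the player takes the left branch if $x_i$ is true and the right branch if $x_i$ is false (Figure~\ref{variable_1}), which by construction gives him access precisely to the ``go'' wires of the literals made true by the assignment. Then, processing the clauses in order, for each clause $C_j$ he picks a literal of $C_j$ satisfied by the assignment, walks down its ``go'' wire into the clause gadget of Figure~\ref{clause_1}, performs the box manipulations of Lemma~\ref{seven} to unlock the adjacent exit, walks back up the ``return'' wire, and proceeds to $C_{j+1}$; after the last clause he reaches the target. One checks that each move of this route is legal, i.e. every narrow corridor the player must clear has the room required to pull the relevant box aside.

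For soundness, suppose the player reaches the target; I would read off an assignment from his trajectory. The no-return gadgets force the trajectory to traverse the variable gadgets and then the clause gadgets in the fixed order, never backtracking across a component boundary; in particular, inside variable gadget $i$ the player commits irrevocably to one branch, by Lemma~\ref{one} applied to the no-return gadgets inside it together with the crossovers which, by Lemma~\ref{two}, prevent the negative-path wires from reaching the positive-path wires and vice versa. Set $x_i$ according to that branch. To pass through clause gadget $C_j$ from $C1$ to $C2$, Lemma~\ref{six} says the player must previously have entered $C_j$ from one of its literal entrances, and Lemma~\ref{seven} says that once he enters from a literal wire he can only leave through the adjacent return wire of that same literal, which is reachable from the rest of the board only if that literal's branch was the one chosen in its variable gadget. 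Hence some literal of each $C_j$ is satisfied by the assignment, so $\varphi$ is satisfiable, completing the reduction.

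The step I expect to be the real work is the soundness argument, and within it the verification that the crossovers and no-return gadgets are genuinely leak-proof under arbitrary play: since the puzzle lets the player carry out box pulls in any order and interleave partial traversals of different gadgets, I must argue that no sequence of moves — including ones that only partially disturb a crossover or a no-return gadget and then abandon it — creates a shortcut unavailable in the intended play. This amounts to a finite but tedious case analysis of the reachable configurations of each gadget, resting on the narrow-corridor arguments already isolated in Lemmas~\ref{one}, \ref{two}, \ref{six} and~\ref{seven}; granting those, the global board behaves as the series/parallel composition of the gadgets and the equivalence ``$\varphi$ satisfiable $\iff$ the Pull instance is solvable'' follows.
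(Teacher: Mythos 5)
Your proposal is correct and follows essentially the same approach as the paper: a reduction from SAT using the no-return, crossover, variable and clause gadgets, assembled as in Figure~\ref{fig:board}, with completeness and soundness resting on Lemmas~\ref{one}, \ref{two}, \ref{six} and~\ref{seven}. In fact the paper states Theorem~\ref{T2} without an explicit proof, leaving the reduction implicit in the gadget descriptions, so your write-up (modulo the small point that the clause-unlocking detours must be interleaved with the variable-gadget traversal rather than performed after it, since the no-return gadgets make the literal wires reachable only from inside the chosen branch) is a faithful and somewhat more detailed rendering of the intended argument.
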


\section{Pull where boxes have handles in opposite directions is NP-hard}

We are going now to prove that all variants where boxes have handles in opposite directions (i.e. left and right or up and down) are NP-hard

In this section, and in the following ones, we will only describe the changes that we must do to the gadgets we presented in the previous section so that they work correctly for this variant of pull.

\subsection{Half unidirectional\\ crossover}
This is a crossover that works like the unidirectional crossover of previous section, but it only works in a single direction. That is, it is a crossover that can, for instance, prevent a player from accesing the horizontal direction if the player enters the gadget from the vertical direction. But it can´t prevent the player from traversing it vertically if he has entered the gadget through the horizontal wire 

When the gadget prevents the player from traversing it vertically from inside the horizontal wires, we will say that the gadget is a "vertical crossover"

When the gadget prevents the player from traversing it horizontally from inside the vertical wire, we will say that the gadget is an "horizontal crossover"

\begin{figure}[H]
  \centering
    \includegraphics[scale=0.5]{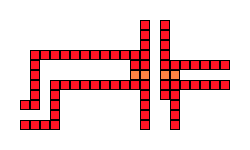}
    \caption{Example picture of an horizontal crossover. If we rotate the gadget 90 degrees, it becomes a vertical crossover}
  \label{fig:horizontal_half_crossover}
\end{figure}

We will use this crossover for all those variants in which we cannot create unidirectional crossovers like the ones from the previous section.This crossover will form a component of bigger crossovers with more functionality

\subsection{Variable gadget}

This gadget is the same variable gagdet the one used in the previous section, but we cannot use the same crossovers we used in the previous section. We have to replace the crossovers of the previous section with half crossovers. See figure 9

\begin{figure}[H]
  \centering
    \includegraphics[scale=0.3]{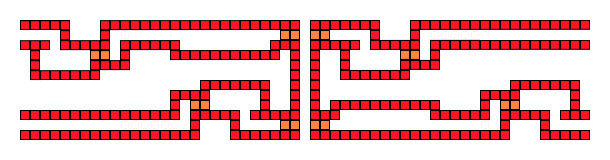}
    \caption{This picture represents the intersection point between an horizontal and a vertical wire inside a variable gadget. The two horizontal wires are a closeup of the two literal wires that the player has to use to go to a clause (up wire) and return from that clause (down wire)}
  \label{fig:var_closeup}
\end{figure}

\subsection{Two way gadget}

This is a gadget that allows the player to go to a position and return back. Once the player returns, the player cant go back to the position he visited previously.

This gadget is formed by two no return gadgets, the arrows in figure 10

\begin{figure}[H]
  \centering
    \includegraphics[scale=0.8]{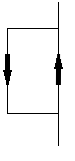}
    \caption{Two way gadget, the direction in which the arrow is pointing represents the direction we must follow to pass through the wire}
  \label{fig:two_way}
\end{figure}

We will only use this gadget inside crossovers

\subsection{Limited Unidirectional Crossover}
This is a unidirectional crossover, similar to the one shown in previous section, but it only works for variants of pull in which the player can only pull boxes in opposite directions. The gadget allows a player to cross from A1 to A2 without allowing the player to exit through B1 or B2 and it allows the player to enter from B1 to B2 without allowing the player to exit through A1 or A2

\begin{figure*}
	\begin{center}
	    \includegraphics[scale=0.8]{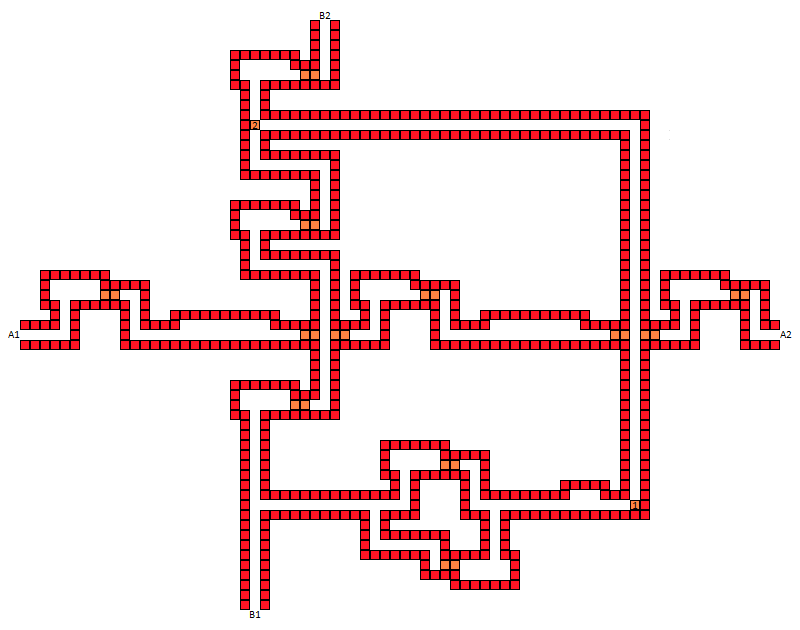}
	    \caption{Complete drawing of the unidirectional crossover. In this example croosover, the player can only pull the boxes left and right.}
	   \label{fig:unidirectional_crossover}
   	\end{center}
\end{figure*}

If the player enters from A1, reaching A2 is straightforward. The player has only to follow the path through A2, going through all the half crossovers and the one way gadgets.

To reach B2 from B1, though, is not so simple. The player can´t simply go upwards until he reaches B2, because he will get trapped between a one way gadget and the block 2. The player must first move through the first corridor to the right of the vertical wire, use it to move the block 2 out of the way upwards and then return to the vertical wire. He can now reach B2 without trouble

\begin{lema}
\label{eleven}
The crossover allows the player to traverse from A1 to A2 without allowing him to reach B1 or B2.
\end{lema}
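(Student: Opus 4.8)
The plan is to separate the statement into its two halves: (i) that the crossing $A_1\to A_2$ can actually be carried out, and (ii) that a player who has entered at $A_1$ can never reach $B_1$ or $B_2$.

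For (i) I would argue exactly along the lines of the paragraph preceding the lemma. Starting at $A_1$, the player walks along the horizontal wire; at each point where it meets the vertical wire he passes through a half crossover set up as a vertical crossover, together with the no‑return (one‑way) gadgets guarding that spot, whose behaviour is governed by Lemma \ref{one}. By the defining property of the half crossover, \emph{horizontal} passage through it is never obstructed, so the player clears the corridor by pulling the blocking boxes left or right, one at a time, into the free cells provided, and eventually exits at $A_2$. This part is routine and I would not dwell on it.

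For (ii) I would run a reachability/invariant argument on the gadget's internal graph as drawn in Figure \ref{fig:unidirectional_crossover}. The claim is that, entering from $A_1$, the player stays confined to the horizontal wire together with the side corridors attached to it, and in particular never sets foot on the vertical wire that joins $B_1$ and $B_2$. The only places where the two wires come close are the crossing points, and each of those is a vertical half crossover: by its defining property a player coming from a horizontal wire has no room to pull out of the way the boxes that block the vertical direction there (the same ``not enough space'' argument used for Lemma \ref{two}), so he cannot move onto the vertical wire. The one‑way gadgets placed along the wire (Lemma \ref{one}) then prevent him from backtracking to attempt a crossing again with a different box configuration, and they also seal off the auxiliary corridor (the one used only for the $B_1\to B_2$ traversal, to shift block~$2$) from the $A_1$ side. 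Since every potential transition from the horizontal part to the vertical part is thereby blocked, and since $B_1$ and $B_2$ lie on the vertical part, the player who started at $A_1$ can reach neither of them.

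The main obstacle I anticipate is the bookkeeping in part (ii): I have to verify, crossing point by crossing point in the concrete picture, that no sequence of left/right pulls available from the horizontal wire or from a side corridor ever clears the vertical obstruction at a crossing, and that the one‑way gadgets are positioned so the player cannot first ``prime'' a crossing (e.g. by nudging a box toward the vertical wire) and then come back to use it. Once these per‑crossing cell counts are checked and Lemma \ref{one} is invoked at each one‑way gadget to kill the backtracking strategies, assembling the local facts into the global statement is immediate.
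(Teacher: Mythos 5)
Your overall skeleton --- establish that the $A_1 \to A_2$ walk is feasible, then examine the two points where the wires cross and show the player cannot step onto the vertical wire at either of them --- matches the paper's proof, and your part (i) is handled the same way as the paragraph preceding the lemma. The gap is in the mechanism you invoke for part (ii). You assume each crossing is guarded by a \emph{vertical} half crossover whose defining property (``not enough space to pull the blocking boxes out of the way'', as in Lemma \ref{two}) makes vertical entry impossible. That is not how this gadget is built, and it essentially cannot be: the half crossovers inside the limited unidirectional crossover are oriented as \emph{horizontal} crossovers, because they are what protects the horizontal wire from a player travelling $B_1 \to B_2$ (the companion lemma's proof relies on exactly this). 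A half crossover protects only one of the two directions, which is the whole reason this more elaborate composite gadget is needed in the first place.

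What actually seals the vertical wire against a player coming from $A_1$ is a combination of one-way gadgets and the blocks labelled 1 and 2 in Figure \ref{fig:unidirectional_crossover}, and the argument is more delicate than ``no room to pull''. At the leftmost crossing, descent toward $B_1$ is stopped by a one-way gadget that cannot be traversed downwards, and ascent by block 2 or by an already-used one-way gadget. At the rightmost crossing, block 1 \emph{can} be pulled out of the way --- but only by traversing one one-way gadget and returning through the other, which closes both, so after clearing block 1 the player has locked himself out of the very descent he was trying to open; ascent there is again stopped by block 2, which can only be dislodged from the side corridor reachable from the vertical wire. Your framing (``verify that no sequence of pulls clears the vertical obstruction'') would fail at the rightmost crossing, because a sequence of pulls \emph{does} clear block 1; the correct argument is about the state of the one-way gadgets after that clearing, not about the impossibility of moving the block.
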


\begin{proof}
 We observe by looking at the gadget that there are only two intersections in which the horizontal wire crosses the vertical one. We will proof that attempting to enter into the vertical wires by any of the two intersections is impossible or will only get the player trapped inside the gadget.
 
 In the leftmost intersection, we observe that the player can´t enter inside B1 because he is prevented to do so by a one way gadget that cannot be traversed downwards. If the player tries to move upwards, either an already traversed one way gadget or the block 2 will prevent the player from reaching upwards.
 
 In the rightmost intersection, the player is blocked downwards by the block 1 or by two one way gadgets that have already been traversed. This is inevitable, since to pull the block 1 out of the way the player must cross one of the two one way gadgets and then it must  return through the other one, closing it. The player cannot go upwards because there is the block 2 preventing him from reaching B2.
\end{proof}

\begin{lema}
\label{eleven}
The crossover allows the player to traverse from B1 to B2 without leaking to A1 or A2.
\end{lema}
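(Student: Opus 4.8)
The plan is to reuse the template of the previous lemma's proof, swapping the roles of the horizontal and vertical wires, and to add a separate argument for the detour that the $B1\to B2$ passage requires. First I would exhibit a legal traversal. The player enters at $B1$ and goes up the vertical wire; just before he would be trapped between block $2$ and the one way gadget below it, he turns into the first corridor to the right of the vertical wire, uses that corridor to pull block $2$ one square upward so that it no longer obstructs the vertical wire, returns to the vertical wire, and continues up to $B2$. I would then check the bookkeeping: every one way gadget met along this route is entered from its open side and so can be traversed (Lemma \ref{one}), and every half crossover met is crossed in the vertical direction by a player who entered it through a vertical wire, so by the definition of the half crossover it offers no obstruction. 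Hence this traversal is legal.

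Next I would establish non-leakage by the same intersection-by-intersection analysis used before. The horizontal and vertical wires meet in exactly two intersections, and a player who entered from $B1$ is confined to the vertical wire together with the single detour corridor. At each of the two intersections, any attempt to turn onto the horizontal wire toward $A1$ or $A2$ is blocked: by a one way gadget oriented against the player, by block $1$ or block $2$ (in particular by block $2$ in its displaced position after the detour), or by one way gadgets that the player's own route has necessarily closed behind him. In every case the player never enters the horizontal wire, and therefore never reaches $A1$ or $A2$.

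The step I expect to be the main obstacle is handling the asymmetry introduced by the detour. Unlike the $A1\to A2$ direction, here the player must temporarily leave the vertical wire, pull block $2$, and come back, and I must rule out that this manoeuvre opens a path onto the horizontal wire — for example that the new position of block $2$, together with the one way gadgets of the detour corridor that are now traversed, lets the player slip across toward $A$. So the heart of the argument is a careful enumeration of the board configurations reachable after the detour, verifying that in each of them every route toward $A1$ or $A2$ remains blocked by a box or by an already-traversed one way gadget. The rest of the proof is routine bookkeeping about one way gadget orientations and half crossover behaviour.
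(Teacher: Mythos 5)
Your proposal matches the paper's argument: the paper likewise relies on the explicit $B1\to B2$ route with the block-2 detour (described just before the lemma) and then checks that at the two intersections with the horizontal wire the player is stopped either by a half crossover or by one way gadgets already closed by the $A1\to A2$ traversal of the previous lemma. Your extra care about whether the detour corridor opens a leak is a reasonable refinement of the same approach, not a different one.
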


\begin{proof}
 We observe by looking at the gadget that there are only two intersections in which the vertical wires cross the horizontal one. All of them are blocked either by a unidirectional crossover or by a one way gadget that it´s blocked. This is inevitable since we showed in the previous lemma that a player entering through A1 can only exit through A2, going through all the one way gadgets in his way.
\end{proof}

\subsection{Bidirectional crossover}

\begin{figure}[H]
  \centering
    \includegraphics[scale=0.5]{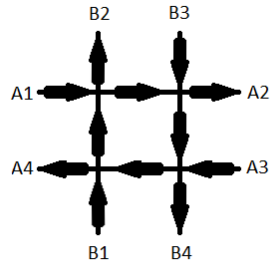}
  \caption{An schematic view of the bidirectional crossover}
  \label{complete_crossover_2}
\end{figure}

This is a crossover that is formed by four limited unidirectional crossovers, rotated and placed as is necessary to make the gadget work. At the entries and exits of the unidirectional crossovers that make the gadget, there are no return gadgets.

This gadget is similar to the bidirectional gadget of \cite{Push} and allows the player to traverse it in the horizontal direction and then the vertical direction or viceversa.

With these components, we can prove the hardness of pull for those versions in which the boxes can only be pulled from opposite directions

\begin{teorema}
\label{thirteen}
The path version of pull in which all boxes can only be pulled from opposite directions is NP-hard
\end{teorema}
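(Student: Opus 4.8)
The plan is to reduce SAT to this variant of pull, following the same scheme as the proof of Theorem~\ref{T2}. Given a boolean formula $\varphi$ in conjunctive normal form with variables $x_1,\dots,x_n$ and clauses $C_1,\dots,C_m$, I would build a pull board of size polynomial in $|\varphi|$, laid out exactly as in Figure~\ref{fig:board}: the variable gadgets stacked on the left, the clause gadgets stacked on the right, and a pair of parallel literal wires (one to go to the clause, one to return) for each occurrence of a literal. The only difference from Section~2 is that every component must now be realized with pieces that are legal when boxes carry only opposite-direction handles: the no return gadget of Lemma~\ref{one} works verbatim (it only ever pulls boxes in one fixed direction), the clause gadget of Figure~\ref{clause_1} already works for this variant by Lemmas~\ref{six} and~\ref{seven}, and the variable gadget is the one of Section~3.2 with each internal wire crossing resolved by a half unidirectional crossover of the appropriate type (horizontal or vertical). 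The crucial replacement is the crossover used to route literal wires across one another: wherever the proof of Theorem~\ref{T2} used the unidirectional crossover or the bidirectional crossover of Section~2, I would instead use the limited unidirectional crossover of Lemma~\ref{eleven} and the bidirectional crossover assembled from four rotated copies of it with no return gadgets at the ports.

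Next I would establish the two directions of the reduction. For the \emph{if} direction, given a satisfying assignment I would exhibit the winning traversal: entering each variable gadget in order, the player takes the left branch when $x_i$ is true and the right branch when $x_i$ is false, and the no return gadgets inside the variable gadget together with the half crossovers make this choice irrevocable and forbid access to the literal wires of the opposite sign. For each clause the satisfying assignment supplies at least one literal wire that has been ``opened''; the player uses that wire to enter the clause gadget, slides the blocking box $O$ out of the corridor (Lemma~\ref{six}), leaves through the adjacent exit (Lemma~\ref{seven}), and eventually reaches $C2$ and the global exit. For the \emph{only if} direction I would argue that any traversal of the board from start to goal must, in each variable gadget, commit to exactly one branch because of the no return gadgets, which defines a consistent truth assignment; and that by Lemma~\ref{six} the player can get through clause gadget $C_j$ only if he has entered it through some literal wire, which by construction is reachable only when that literal was set true. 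Hence the board is solvable iff $\varphi$ is satisfiable. Finally I would check polynomiality: each elementary gadget has constant size, each bidirectional crossover is a constant-size arrangement of four limited unidirectional crossovers (themselves constant-size arrangements of half crossovers, two way gadgets and no return gadgets), the number of wire crossings in the layout is polynomial in $n+m$, and each crossing costs one bidirectional crossover, so the whole board is polynomial in $|\varphi|$ and computable in polynomial time.

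The part I expect to be the main obstacle is the correctness of the composite crossovers and of the wire routing, rather than the global reduction argument, which is essentially that of Theorem~\ref{T2}. Leak-freeness of the bidirectional crossover has to be deduced from Lemma~\ref{eleven} applied to each of its four limited unidirectional crossovers, glued through the no return gadgets at the ports; one must verify that a no return gadget consumed while crossing in one direction does not obstruct a later legitimate crossing in the orthogonal direction, and that no legal sequence of pulls lets a box migrate from one limited crossover into a neighbouring one. Equally delicate is checking that the half crossovers inside the variable gadget really prevent a player arriving on a negative-literal path from reaching any positive-literal wire (and symmetrically): since a half crossover blocks only one of the two orthogonal traversal directions, one has to be careful about which of the two wires at each crossing is the outgoing wire and which is the return wire, and orient each half crossover as a horizontal or vertical crossover accordingly. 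Once these local composition checks are carried out, the reduction goes through as in Section~2.
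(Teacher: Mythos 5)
Your proposal follows essentially the same approach as the paper, which states this theorem immediately after assembling the same components (reused no return and clause gadgets, half crossovers in the variable gadget, and the bidirectional crossover built from four limited unidirectional crossovers) without giving a further explicit proof. Your write-up is in fact more detailed than the paper's, correctly identifying the composite-crossover and wire-orientation checks as the delicate points.
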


\section{Pull with one handle is NP-hard}

In this section, we are going to prove the main claim of our paper. That is, we are going to show that the version of pull where all boxes can only be pulled from one direction (in our examples, the left direction) is NP-hard.

As we did in previous sections, we are only going to describe the \\changes that must be done to the gadgets so that they work for this variant

\subsection{No return gadget}

\begin{figure}[H]
  \centering
    \includegraphics[scale=0.5]{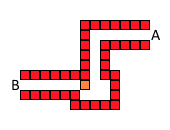}
  \caption{No return gadget for variants in which the player can pull a box in a direction but not the opposite direction.}
  \label{fig:no_return1}
\end{figure}

This gadget is a no return gadget like the one in the previous sections, but adapted for this variant of pull. It allows passage from A to B a single time and once we traverse the gadget we cannot return from B to A.

\subsection{Variable gadget}

We are going to describe the changes we need to make to the variable gadget so that it works for this variant.

If the player comes from the path of a positive literal, we don´t have to do any changes to the gadget, since the one way gadget that is just after the literal wires of the negative literal prevent him from accessing those literal wires.

\begin{figure}[H]
  \centering
    \includegraphics[scale=0.3]{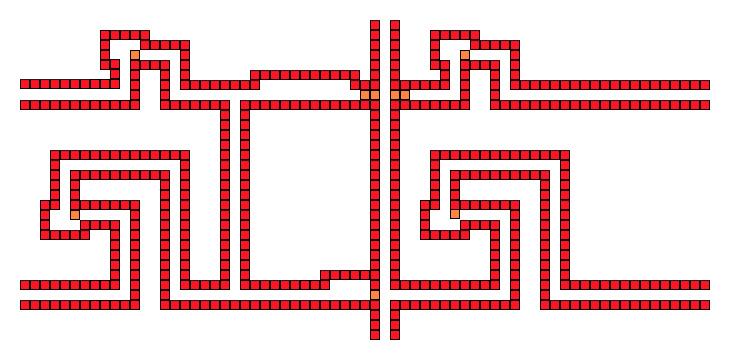}
    \caption{This picture represents the intersection point between an horizontal and a vertical wire inside a variable gadget. The two horizontal wires are a closeup of the two literal wires that the player has to use to go to a clause (up wire) and return from that clause (down wire). In this example figure, the player can only pull the boxes left and down}
  \label{fig:var_closeup_2}
\end{figure}

Now, if the player comes from the negative literal path, we have to ensure that the player cannot access any of the literal wires that belong to the other path.

In the example shown in figure 14, this is done by  placing an horizontal crossover in the wire that goes to the clause. 

For the return wire, a single box prevents the player from accesing the wire. This box can be unlocked from  a corridor that comunicates the two wires. This corridor can only be traversed from the interior of the parallel wire above it.

\subsection{Clause gadget}
This clause gadget is similar the one that appears in the previous, but the no return gadgets have been changed so that they work for this variant of pull.
 
\begin{figure}[H]
  \centering
    \includegraphics[scale=0.5]{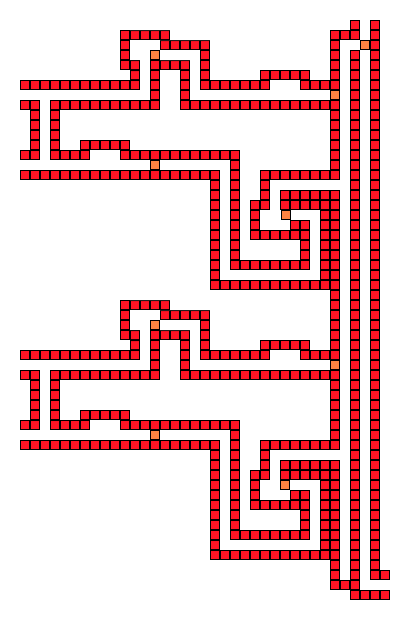}
  \caption{Clause gadget, it is similar to the one shown in section 2 but changing the no return gadgets}
  \label{clause_4}
\end{figure}

\subsection{Crossover gadget}

\begin{figure*}
	\begin{center}
	    \includegraphics[scale=0.8]{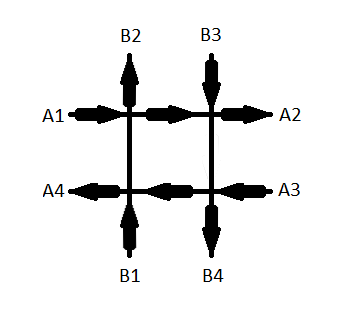}
	    \caption{schematic representation of the crossover.}
	   \label{fig:crossover_scheme_2}
   	\end{center}
\end{figure*}

This is a crossover that is valid for all variants of pull where the player can only pull the boxes from a single direction. In our example, the left direction. This crossover is formed by two "halfs". Each of these halfs contains a vertical wire and the intersection of the two horizontal wires with the vertical wire.

The first half contains a vertical crossover to go towards the clause and the second half contains a vertical wire to return. Of course, the two halfs can be "inverted" (that is, the "left" half can be the "right" half and viceversa)

Now, we will show that the first half of the gadget works as it should by a series of lemmas. The other half works the same, so the proof is symmetrical. The only difference is that the vertical wire is "flipped" vertically in the second half of the gadget.

\begin{lema}
\label{fifteen}
The crossover allows the player to go from A1 to the exit of the first half of the crossover without letting him exit from B1 or B2
\end{lema}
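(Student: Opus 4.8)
The plan is to argue exactly as in the proof of Lemma~\ref{eleven}, adapting the case analysis to the one‑handle setting. First I would fix the orientation used in the figures (boxes pullable only in the left direction, except inside the variable‑style subcomponents where they may also be pulled down) and identify the places where the horizontal wire carrying the player from A1 crosses the vertical wire of the first half: the crossing that sits inside the vertical (half) crossover of this half, and the remaining crossing, which is guarded by the one‑handle no return gadgets of Figure~\ref{fig:no_return1}. The positive part of the statement is then immediate: starting at A1 the player simply follows the horizontal corridor towards the exit of the first half, traversing the vertical crossover in its horizontal direction (which it permits by construction) and passing each no return gadget in its allowed direction; none of the boxes on this path block a corridor he actually needs.

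For the negative part I would go crossing by crossing, as in Lemma~\ref{eleven}. At the crossing that lies inside the vertical crossover, the defining property of a vertical crossover — here realised with the one‑handle no return gadget of Figure~\ref{fig:no_return1} in place of the earlier one — is precisely that a player who enters through the horizontal wire cannot leave through either vertical end, so this crossing leaks nothing into B1 or B2. At the remaining crossing I would show that every attempt to turn onto the vertical wire is stopped by a blocking box together with a no return gadget oriented against the player: to get that box out of the corridor the player would have to approach it from the opposite side, which (by the analysis of the first crossing and by Lemma~\ref{one}) he can only reach after performing a move that seals a no return gadget behind him, so he ends up trapped rather than exiting through B1 or B2. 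Finally, the corridor that in the full gadget connects the ``go'' wire to the ``return'' wire is, by construction, traversable only from the interior of the other horizontal wire, so a player coming from A1 cannot exploit it to reach the vertical wire either.

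The step I expect to be the main obstacle is the box‑blocking sub‑argument under the single‑direction constraint: with pulls allowed in only one direction I cannot fall back on the symmetric ``no room to pull up / no room to pull down'' dichotomy that closed the opposite‑handles crossover in Lemma~\ref{eleven}, so I must verify that the one admissible pull direction never frees the blocking box for a player approaching from A1, and that the narrow corridors are dimensioned so that any attempt to do so strands the player between a closed no return gadget and a box. Once that is pinned down, combining it with Lemma~\ref{one} for the no return gadgets and with the defining property of the vertical crossover closes the proof of Lemma~\ref{fifteen}; the companion statements (that the same half admits the traversal originating at B1, and the corresponding claims for the vertically flipped second half of the crossover of Figure~\ref{fig:crossover_scheme_2}) then follow by the same reasoning applied to the mirrored wire.
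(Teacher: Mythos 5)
Your overall strategy is the same as the paper's: a walk-through of the first half of the gadget, followed by a crossing-by-crossing case analysis showing that every turn onto the vertical wire is sealed by a blocking box together with a no return gadget. However, two concrete pieces are missing. First, the sub-argument you yourself flag as the main obstacle --- verifying that the single admissible pull direction never lets a player coming from A1 free the boxes guarding the vertical wire, and that any attempt strands him behind a closed no return gadget --- is precisely the content of the lemma, not a detail to be deferred; the paper's proof consists almost entirely of carrying out that verification at both intersections (e.g.\ the obstacle at the second intersection \emph{can} be moved out of the way, and the argument must show that having done so the player cannot return to the earlier intersection because of the no return gadget he crossed to reach the pulling position). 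A proposal that postpones this step has not yet proved the lemma.

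Second, your case analysis only treats the gadget in its initial state. The bidirectional crossover must be safe in either order of traversal, so the paper's proof explicitly splits each intersection into the case where the vertical wire has not yet been traversed (blocking by a box plus a fresh no return gadget) and the case where it has (blocking by a pair of already-used no return gadgets, which arise because the vertical traversal forces the player through the two way gadget to clear the block in the vertical corridor). Your proposal never considers a player who arrives at A1 after the B1-to-B2 traversal has already rearranged the boxes, so the ``already traversed'' branches of the argument are absent. Relatedly, your decomposition into ``the crossing inside the vertical crossover'' and ``the remaining crossing'' does not match the paper's gadget, where the player passes through two half crossovers and meets two intersections, each of which is closed by boxes and no return gadgets rather than one being disposed of purely by the defining property of a vertical crossover.
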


\begin{proof}
After the player crosses the first no return gadget, he has to pull a block in the corridor just below him. This is so that he can return through the wire below him without getting trapped. He can´t use that corridor to reach the other horizontal wire since the block prevents him from reaching it. After crossing another no return gadget and opening the first crossover, he finds the first intersection with the vertical wire. If the player has not traversed the vertical wire before  and he goes upward, after using a no return gadget, he will find himself trapped because there is a block preventing him from going upwards. If, instead, he goes downwards, another no return gadget will block his way. If the player has traversed the vertical wire before, the player will have two no return gadgets, one below him and other above him that will prevent him from going upwards or downwards. 

If, instead, he moves forward, after crossing the second crossover, he will find another intersection with the vertical wire. From here the player can move upwards, but there is an obstacle that blocks his way up. The player can move the obstacle out of the way but he can´t go back to the previous intersection and move upwards from there, since he had to cross a no return gadget to get to the corridor, and that no return gadget prevents him from going to the previous intersection. If the player decides to go down there will be a block preventing him from entering the vertical wire from there. However, if the player traversed the vertical wire before, two no return gadgets will block his way. This is because the player has to use the two way gadget there in order to move the block inside the vertical corridor.

The player is then forced to exit through A2
\end{proof}

\begin{lema}
\label{fifteen}
The crossover allows the player to go from the entrance of the first half of the crossover to A4 without letting the player go to B1 or B2
\end{lema}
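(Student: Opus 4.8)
The plan is to mirror the argument of the preceding lemma, but now tracing the \emph{return} horizontal wire of the first half instead of the outgoing one. I would begin by fixing the entry point — the player arrives at the entrance of the first half along the return wire — and the goal is to force him out through A4 while ruling out any leak into the vertical wire, which is the only route to B1 or B2.

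The key steps, in order, are: first, trace the forced opening moves, recording which block the player must displace through the connecting corridor so that he does not trap himself behind the leading no return gadget, exactly as in the outgoing‑wire case. Second, enumerate the intersections of this horizontal wire with the vertical wire; by the construction of the half crossover there are exactly two. Third, at each intersection perform a case split on whether the vertical wire has already been traversed: if not, a single block (when the player tries to go up) or a single no return gadget (when he tries to go down) stops him; if it has, then crossing that wire previously forced the player through the two way gadget, which leaves an already traversed no return gadget both above and below the intersection, so both directions into the vertical wire are barred. Fourth, conclude that at every intersection, entering the vertical wire is either outright impossible or immediately self‑trapping, so the player is forced forward and must exit through A4.

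The step I expect to be the main obstacle is the third — the bookkeeping around the no return gadgets and the two way gadget. I must verify that ``having traversed the vertical wire'' genuinely seals off both the upward and the downward direction at the relevant intersection, independently of the order in which the player has used the wires of the full crossover; for this I would lean on the once‑only, one‑directional property of the no return gadget (Lemma \ref{one}) together with the defining behaviour of the two way gadget. As in the preceding proof the two intersections need separate treatment, because the obstruction at the first one (a block on the up side) is of a different character from the obstruction at the second (a block the player can move, but only after crossing a no return gadget that thereafter prevents him from returning to the earlier intersection). Since the second half of the crossover is the vertical flip of the first, no further argument is required there, and the lemma follows.
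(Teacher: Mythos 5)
Your overall strategy coincides with the paper's: walk the return wire, isolate the two intersections with the vertical wire, and at each one split on whether the vertical wire has already been traversed, using the once-only property of the no return gadgets and the role of the two way gadget to seal off both vertical directions. The one place where your outline departs from the gadget as actually drawn is your opening step: on the return wire the player does \emph{not} displace a block through the connecting corridor at the start, ``exactly as in the outgoing-wire case.'' In the paper that corridor between the two horizontal wires is met at the far end of the return-wire traversal, just before A4, and its block --- pullable only to the left --- is precisely what prevents the player from using the corridor to leak into the other horizontal wire; the block-displacement manoeuvre you describe belongs only to the outgoing-wire lemma. A second, smaller point: at the first intersection the paper's upward obstruction is not a block that stops the player outright but a no return gadget followed by a block, so the player can enter and is then trapped; your phrase ``outright impossible or immediately self-trapping'' happens to cover this, so the case analysis at the intersections still goes through as you intend.
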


\begin{proof}
After crossing the first no return gadget to his left, the player will find himself in the middle of an intersection. If the player has not traversed the vertical wire before, he can go through a no return gadget in his way forwards, then he will find himself trapped between the previous no return gadget and a block above him. If the player has already traversed the vertical wire, a no return gadget will prevent him from going upwards. If the player goes downwards instead, a no return gadget will stand in his way. 

If the player moves left instead, after crossing another no return gadget, he will reach another intersection with the vertical wire. He can´t go into the vertical wire by moving upwards because a block stands in his way. He can´t move this block out of the way and then try to go upwards from the previous intersection because a no return gadget that he has already traversed doesn´t allow him to move right towards the intersection. If the player has crossed the vertical wire before, a used two way gadget will prevent him from reaching the vertical wire. This is because the player has to use that gadget in order to clear the path inside the vertical wire.

The player has to keep moving left and, after traversing another no return gadget, he finds a corridor connecting the two horizontal wires, but because there is a block inside the corridor, and this block can only be pulled to the left, he can´t use the corridor to reach the other horizontal wire. He is then forced to exit through A4.
\end{proof}

\begin{figure*}
    \includegraphics[scale=0.4]{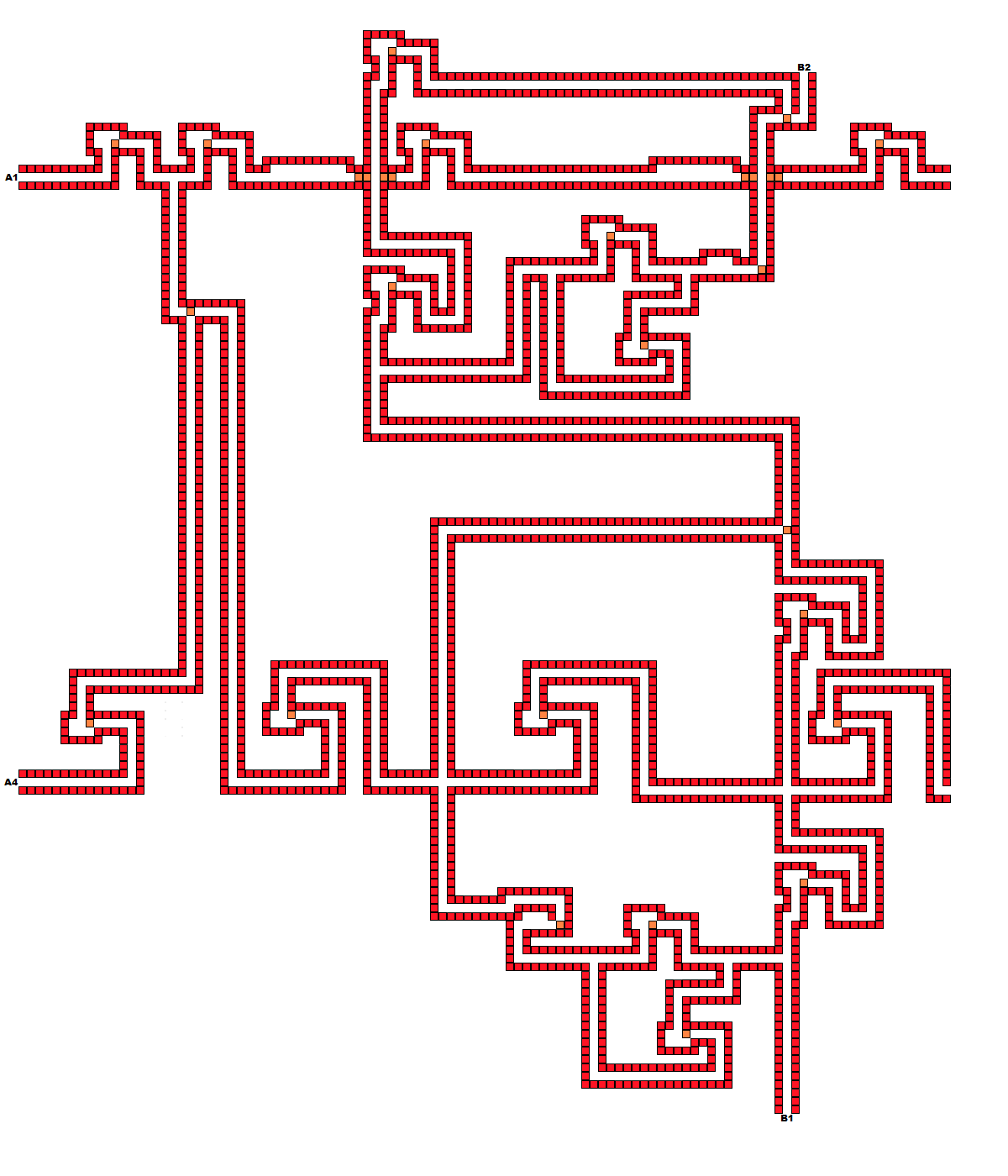}
    \caption{Drawing of the first half of the  crossover. In this example, the player can only pull the boxes to the left }
  \label{fig:horizontal_crossover_one_handle}
\end{figure*}

\begin{figure*}
    \includegraphics[scale=0.4]{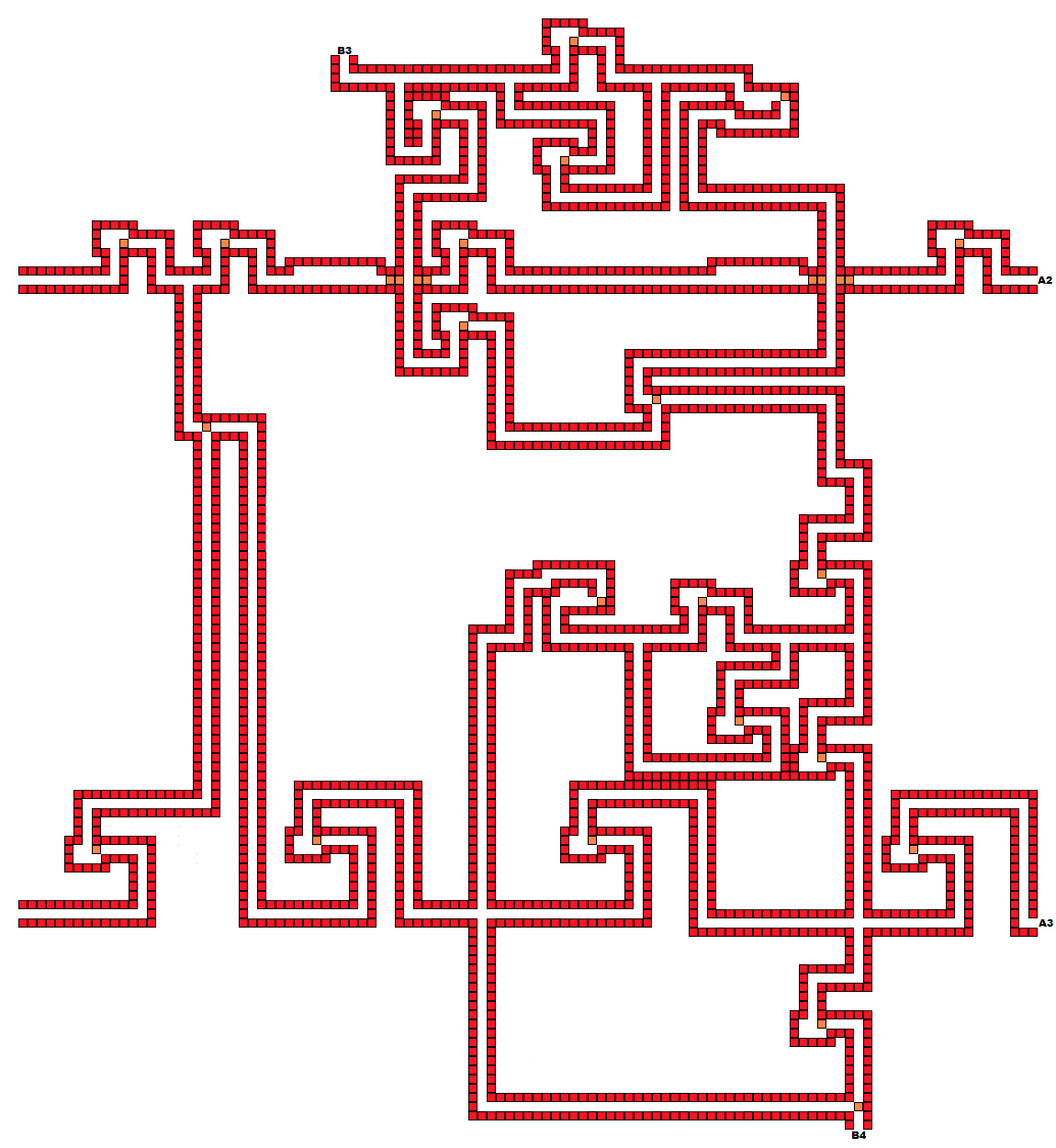}
    \caption{Drawing of the second half of the  crossover.}
  \label{fig:horizontal_crossover_one_handle_2}
\end{figure*}

\newpage

\begin{lema}
\label{fifteen}
The crossover allows the player to go from B1 to B2 without letting the player go to A1 or A4
\end{lema}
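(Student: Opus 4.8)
The plan is to follow the template of the two preceding lemmas: first exhibit the intended traversal $B1\to B2$ straight up the vertical wire, and then rule out every side-exit by inspecting, one intersection at a time, the no return gadgets and the blocks that guard the two horizontal wires. First I would describe the forced detour. Moving upward along the vertical wire the player meets the block sitting inside the vertical corridor (the obstacle mentioned in the earlier lemmas); to clear it he has to leave the wire through the side corridor, pull that block to the left into its pocket, and re-enter the vertical wire, which forces him across the two way gadget attached to that corridor. I would check that this side corridor is a dead end apart from the vertical wire — in particular it does not meet either horizontal wire — so the detour cannot itself leak, and that once the block has been pulled aside the player can continue up to $B2$ without further obstruction.

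Next I would handle the two intersections of the vertical wire with the horizontal wires. At each intersection the argument splits into two cases according to whether that horizontal wire has already been traversed. If it has not, the box guarding the crossing (a box that can only be pulled to the left) still blocks the sideways opening, and from the vertical wire the player has no room to pull it out of position, so he cannot step into the horizontal wire. If the horizontal wire has already been traversed, then by the two previous lemmas that traversal was forced through every no return gadget along the wire, including the ones flanking this intersection; those gadgets are now closed against the direction the player would need, so again he cannot enter. In either case the only move available at the intersection is to continue vertically, so no leak to $A1$ or $A4$ is possible.

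The hard part will be the bookkeeping of the order of events: the player may interleave the vertical traversal with a partial horizontal traversal, and the detour across the two way gadget changes the state of a no return gadget that also guards an intersection. I would argue that the gadgets are placed so that, whichever order the player picks, each intersection is guarded by at least one no return gadget or block still in a blocking configuration at the instant the player reaches it — exactly the phenomenon exploited in the analogous lemma for the opposite-direction crossover, where forcing the full horizontal traversal automatically seals the vertical one. Finally I would invoke the symmetry already noted in the text: the second half of the crossover is the vertical reflection of the first, so the same case analysis shows the player cannot leak to $A1$ or $A4$ through the second half either, which completes the proof that the traversal $B1\to B2$ is clean.
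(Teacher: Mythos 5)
Your proposal follows essentially the same approach as the paper: a case analysis at each intersection of the vertical wire with the horizontal wires, splitting on whether the horizontal wire has already been traversed, and showing that in every case a still-loaded block (pullable only to the left, hence immovable from the vertical wire) or an already-closed no return gadget seals the opening, so the only remaining exit is B2. The extra material you add --- the explicit detour through the side corridor and two way gadget to clear the block in the vertical corridor, and the symmetry remark for the second half --- is consistent with what the paper establishes elsewhere (in the preceding lemmas and the preamble to this series of lemmas) rather than a different argument.
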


\begin{proof}
After crossing the first no return gadget above him, the player will find himself in the middle of an intersection. If the player has not traversed the horizontal wire before, he can go through a no return gadget that is to the left of him, then he can cross another no return gadget to his left, but then, he will find a block that prevents him from moving further left. That block can only be moved out of the way if the player traversed the other horizontal wire before. If the player has already traversed the horizontal wire, a used no return gadget will prevent him from going to the left. If the player goes to the right instead, a no return gadget will stand in his way. 

If the player moves up there are two points where his path crosses with the horizontal wire. But he is prevented from entering the horizontal wire from those intersections, either because there is a crossover that hasn´t been opened or because the two no return gadgets have already been traversed. 

He is then forced to go out through B2.
\end{proof}

\begin{teorema}
\label{sixteen}
The path version of pull in which all boxes can only be pulled from a single direction is NP-hard
\end{teorema}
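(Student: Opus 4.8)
The plan is to reduce SAT to this variant of Pull, mirroring the arguments behind Theorems~\ref{T2} and~\ref{thirteen} but using the one-handle gadgets developed in this section. Given a CNF formula $\varphi$ with variables $x_1,\dots,x_m$ and clauses $C_1,\dots,C_k$, I would build a Pull-F board by placing the variable gadgets (Figure~\ref{fig:var_closeup_2}) in a vertical stack on the left, the clause gadgets (Figure~\ref{clause_4}) in a vertical stack on the right, and routing each literal occurrence along a pair of parallel wires (one outgoing, one return) from its variable gadget to the clause in which it occurs, inserting the one-handle crossover of Figure~\ref{fig:crossover_scheme_2} at every point where two wires must cross. The player starts at the entrance of the first variable gadget; the gadgets are chained so that exiting variable gadget $i$ feeds into variable gadget $i+1$, and after the last variable the path threads the $C1\to C2$ passage of every clause gadget in turn before reaching a single global exit. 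Each gadget has size independent of $\varphi$ and the number of wire crossings is polynomial in $m+k$, so the board is constructible in polynomial time; the remark in the introduction then also yields the result for the variant without fixed blocks.

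For correctness I would argue both directions. \emph{If $\varphi$ is satisfiable}, fix a satisfying assignment; in each variable gadget the player takes the left branch if $x_i$ is true and the right branch if it is false, which by construction commits him to exactly the corresponding family of literal wires. For each clause he enters through the wire of some literal made true by the assignment, uses it and returns (by the clause-gadget lemmas~\ref{six} and~\ref{seven}, which carry over to the adapted clause gadget), thereby pulling the blocking box $O$ one square to the left so that the $C1\to C2$ corridor becomes passable later; he then walks out through the global exit. \emph{Conversely}, if the player reaches the exit, the one-handle no-return gadgets inside each variable gadget (Figure~\ref{fig:no_return1}) force him to commit to exactly one of the two branches, so his traversal induces a well-defined truth assignment. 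By Lemma~\ref{six}, crossing a clause gadget from $C1$ to $C2$ requires having entered that clause earlier from a literal wire, and by Lemma~\ref{seven} a literal wire can only be entered from the branch consistent with the induced assignment; hence every clause contains a literal satisfied by that assignment and $\varphi$ is satisfiable.

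The one substantive thing left to verify is that wires may cross freely without the single-handle restriction opening an unintended route: this is precisely the content of the three crossover lemmas labelled~\ref{fifteen}, which show that a player entering the first half of the crossover from $A1$ is forced out through $A2$ and then $A4$, and a player entering from $B1$ is forced out through $B2$, with no leakage between the horizontal and the vertical wires in any reachable state of the gadget (whether or not the transverse direction has already been used), and whose second-half versions follow by the stated vertical flip symmetry. Composing these with the no-return gadgets placed at every gadget port guarantees that no global shortcut exists, so the reduction is faithful. I expect the main obstacle to be the bookkeeping in this composition step — checking that the "already traversed" and "not yet traversed" configurations of each crossover and no-return gadget are simultaneously accounted for when several wires share a region of the board — but this is routine given the per-gadget lemmas already established.
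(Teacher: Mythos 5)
Your proposal takes essentially the same approach as the paper: the paper states this theorem with no explicit proof, leaving it to follow from the SAT reduction scheme of Section 2 (stacked variable gadgets, stacked clause gadgets, paired literal wires, crossovers at every wire intersection) instantiated with the one-handle no-return, variable, clause and crossover gadgets of Section 4 and their lemmas, which is exactly the argument you spell out. Your write-up is consistent with the paper and in fact more explicit about the two directions of correctness and the polynomial-time constructibility than the paper itself.
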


\section{Open questions and future directions}

Can the results proven here for the path versions of pull be proven for the storage versions of pull?. That is, will it be possible to show that the storage versions of pull where the player can only pull boxes in opposite directions are NP-hard?. Can it be proven that the storage variants of pull where all boxes can only be pulled in only one direction are NP-hard?

We think that the versions where the player can only pull boxes in opposite directions are NP-hard as well. If there exists a variant of planar SAT where the negative clauses are outside the cycle of variables and the positive clauses are inside the cycle of variables, and this variant of SAT is NP-hard, then we can reduce that variant of SAT to the version of storage pull where the player can only pull boxes in opposite directions.

Which variants of Push are NP-hard when the player can only push boxes in certain directions?.

Is the variant of pull where the player cannot revisit a square (called pull-X) still hard if the player can only pull boxes in certain directions?. If the problem is easy this will solve the open question of \cite{Push2} about finding an interesting, but tractable, block-moving puzzle. What about push-X?

\end{document}